\newcommand{\adj}{\,\text{\textemdash}\,}
\def\..{\,\mathpunct{\ldotp\ldotp}} 
\newcommand{\noarg}{-} 
\newcommand{\lift}[2]{\widetilde{#1}^{#2}}
\newcommand{\la}{\langle}
\newcommand{\ra}{\rangle}
\renewcommand{\phi}{\varphi}
\renewcommand{\epsilon}{\varepsilon}
\newtheorem{definition}{Definition}
\newtheorem{theorem}{Theorem}
\newcounter{noqed}
\newcommand{\qed}{ \ifmmode\mbox{ }\fi\rule[-.05em]{.3em}{.7em}\setcounter{noqed}{0}}
\newenvironment{proof}[1][{}]{\noindent{\bf Proof#1. }\setcounter{noqed}{1}}{\ifnum\value{noqed}=1\qed\fi\par\medskip}
\begin{document}
\bibliographystyle{alpha}

\title{Spectral Rank Monotonicity on Undirected Networks}%
\author{Paolo Boldi\and Flavio Furia\and Sebastiano Vigna\\
Dipartimento di Informatica, Universit\`a degli Studi di Milano, Italy}

\maketitle

\begin{abstract}
We study the problem of \emph{score} and \emph{rank} monotonicity for \emph{spectral ranking} methods,
such as eigenvector centrality and PageRank, in the case of undirected networks. Score monotonicity means that
adding an edge increases the score at both ends of the edge. Rank monotonicity means that adding
an edge improves the relative position of both ends of the edge with respect to the remaining nodes. 
It is known that common spectral rankings are both score and rank monotone on directed, strongly connected graphs.
We show that, surprisingly, the
situation is very different for undirected graphs, and in particular that PageRank is
neither score nor rank monotone.
\end{abstract}

\section{Introduction}

The study of centrality in networks goes back to the late forties. Since then,
several measures of centrality with different properties have been published (see~\cite{BoVAC} for a survey).
To sort out which measures are more apt for a specific application, one can try
to classify them by means of some axiom that they might satisfy or not. 

In a previous paper~\cite{BLVRMCM}, two of the authors have studied in particular
\emph{score monotonicity}~\cite{BoVAC} and \emph{rank monotonicity} on directed graphs.
The first property says that when an arc
$x\to y$ is added to the graph, the score of $y$ strictly increases.
Rank monotonicity~\cite{CDKLEAA}
states that after adding an arc $x\to y$, all nodes with a score
smaller than or equal to $y$ have still a score smaller than or equal to $y$.
Score and rank monotonicity complement themselves. Score
monotonicity tells us that ``something good happens''. Rank monotonicity
that ``nothing bad happens''. 

Once we move to undirected graphs, however, previous definitions and results are no longer
applicable. Note that adding a single edge to an undirected graph is equivalent to adding \emph{two} opposite arcs
in a directed graph, which may suggest why the situation is so different. In this paper, we
propose definitions that are natural extensions of the directed case, and prove
results about classical types of spectral ranking~\cite{VigSR}---eigenvector
centrality~\cite{LanZRWT,BerTGA}, Seeley's
index~\cite{SeeNRI}, and PageRank~\cite{PBMPCR}. With minor restrictions,
all these measures of centrality have been proven to be score and rank monotone in the directed case~\cite{BLVRMCM}.
However, we will prove that, surprisingly, this is no longer true in the undirected
case: in the case of eigenvector centrality and PageRank, at least one of the extremes of the edge might
lower both its score and its rank.

To prove general results in the case of PageRank, we use the theory of graph fibrations~\cite{BoVGF},
which makes us able to reduce the computation of a spectral ranking of a graph of variable size to a similar
computation on a finite graph.
This approach to proofs, which we believe is of independent interest,
makes it possible to use analytic techniques to control the PageRank values.

We conclude the paper with some anecdotal evidence from a medium-sized real-world
network, showing that violations of rank monotonicity do happen.

\section{Graph-theoretical preliminaries}
\label{sec:defs}

While we will focus on simple undirected graphs, we are going to make use of
some proof techniques that require handling more general types of graphs.

A \emph{(directed multi)graph} $G$ is defined by a set $N_G$ of nodes, a set
$A_G$ of arcs, and by two functions $s_G,t_G:A_G\to N_G$ that specify the
source and the target of each arc (we shall drop the subscripts whenever no
confusion is possible); a \emph{loop} is an arc with the same source and target.
We use $G(i,j)$ for denoting the set of
arcs from node $i$ to node $j$, that is, the set of arcs $a\in A_G$ such that $s(a)=i$ and
$t(a)=j$; the arcs in $G(i,j)$ are said to be \emph{parallel} to one
another.  
Similarly, we denote with $G(\noarg,i)$ the set of arcs coming
into $i$, that is, the set of arcs $a\in A_G$ such that $t(a)=i$, and analogously with
$G(i,\noarg)$ the set of arcs going out of $i$. 
Finally, we write $d_G^+(i)=|G(i,\noarg)|$ for the \emph{outdegree}
of $i$ in $G$ and $d_G^-(i)=|G(\noarg,i)|$ for the \emph{indegree} of $i$ in
$G$.

The main difference between this definition and the standard definition of a
directed graph is that we allow for the presence of multiple arcs between any
pair of nodes.
Since we do not need to distinguish between graphs that only differ because of node names, we will
always assume that $N_G=\{\,0,1,\dots,n_G-1\,\}$ where $n_G$ is the number of nodes of $G$.
Every graph $G$ 
has an associated $n \times n$ \emph{adjacency matrix}, also denoted by $G$, where $G_{ij}=|G(i,j)|$. 

A \emph{(simple) undirected graph} is a loopless\footnote{Note that our negative results are \emph{a fortiori} true if we consider
undirected graphs with loops. Our positive results are still valid in the same case
using the standard convention that loops increase the degree by two.} graph $G$ such that for all $i,j \in N$,
$|G(i,j)|=|G(j,i)|\leq 1$. In other words, there are no parallel arcs and if there is an
arc from $i$ to $j$ there is also an arc in the opposite direction.
In an undirected graph, an \emph{edge} is an unordered set of nodes $\{\,i,j\,\}$ (simply denoted by $i \adj j$) such that $|G(i,j)|=1$; 
the set of all edges will be denoted by $E_G$; obviously, the number of edges is exactly half of the number of arcs. 
For undirected graphs,
we prefer to use the word ``vertex'' instead of ``node'', and use $V$ (instead of $N$) for the set of vertices
and $d(x)$ for the degree of a vertex $x$.

\section{Score and rank monotonicity axioms on undirected graphs}

One of the most important notions that researchers have been trying to capture
in various types of graphs is ``node centrality'':
ideally, every node (often representing an
individual) has some degree of influence or importance within the social domain
under consideration, and one expects such importance to be reflected in the
structure of the social network; centrality is a quantitative measure that
aims at revealing the importance of a node.

Formally, a \emph{centrality} (measure or index) is any function $c$ that, given a graph $G$, assigns a  
real number $c_G(x)$ to every node $x$ 	of $G$; countless notions of centrality have been proposed over time, for
different purposes and with different aims; each of them was originally defined only for a specific category of graphs. Later some of
these notions of centrality have been extended to more general classes; in this paper, we shall only consider centralities 
that can be defined
properly on all undirected graphs (even disconnected ones).

Axioms are useful to isolate properties of different centrality measures and make it possible to compare them. One
of the oldest papers to propose this approach is Sabidussi's paper~\cite{SabCIG}, and many other proposals have appeared in the
last two decades.

In this paper we will be dealing with two properties of centrality measures:

\begin{definition}[Score monotonicity]
Given an undirected graph $G$, 
a centrality $c$ is said to be \emph{score monotone on $G$} iff for every pair of non-adjacent vertices $x$ and $y$ we have that
\[
	c_{G'}(x) > c_G(x) \text{\ and\ } c_{G'}(y) > c_G(y),
\]
where $G'$ is the graph obtained adding the new edge $x \adj y$ to $G$.
It is said to be \emph{weakly score monotone on $G$} iff the same property holds, with $\geq$ instead of $>$.
We say that $c$ is \emph{(weakly) score monotone on undirected graphs} iff it is (weakly) score monotone on all 
undirected graphs $G$.
\end{definition}

\begin{definition}[Rank monotonicity]
Given an undirected graph $G$, 
a centrality $c$ is said to be \emph{rank monotone on $G$} iff for every pair of non-adjacent vertices $x$ and $y$ we have that for all vertices $z\neq x,y$
\[
	c_{G}(x) \geq c_G(z) \Rightarrow c_{G'}(x) \geq c_{G'}(z) \text{\ and \ }
	c_{G}(y) \geq c_G(z) \Rightarrow c_{G'}(y) \geq c_{G'}(z),\\
\]
where $G'$ is the graph obtained adding the new edge $x \adj y$ to $G$.
It is said to be \emph{strictly rank monotone\footnote{Note that the published version of this paper~\cite{BFVSRMUN}
contains a slightly different (and mistaken) definition which does not extend correctly the definition given in~\cite{BLVRMCM}.} on $G$} if instead
\[
	c_{G}(x) \geq c_G(z) \Rightarrow c_{G'}(x) > c_{G'}(z) \text{\ and \ }
	c_{G}(y) \geq c_G(z) \Rightarrow c_{G'}(y) > c_{G'}(z).\\
\]
We say that $c$ is \emph{(strictly) rank monotone on undirected graphs} iff it is (strictly) rank monotone on all 
undirected graphs $G$.
\end{definition}

These four properties\footnote{The asymmetric use of strict/weak in the two definitions is for consistency with the previous literature on this topic.} can be studied on the class of all undirected graphs or only on the connected class, giving rise to eight possible ``degrees
of monotonicity'' that every given centrality may satisfy or not. This paper studies these different degrees of monotonicity for three popular 
spectral centrality measures, also comparing the result obtained with the corresponding properties in the directed case. As we shall see, the
undirected situation is quite different.

\section{Eigenvector centrality}

Eigenvector centrality is probably the oldest attempt at deriving a centrality from matrix information: a first
version was proposed by Landau in 1895 for matrices representing the results of chess tournaments~\cite{LanZRWT}, and
it was stated in full generality in 1958 by Berge~\cite{BerG}; it has been rediscovered many times since then.
One considers the adjacency matrix of the graph and computes its left or right dominant eigenvector, which in our case coincide: the result
is thus defined modulo a scaling factor, and if the graph is strongly
connected, the result is unique (modulo the scaling factor).

\begin{figure}
\centering
\includegraphics{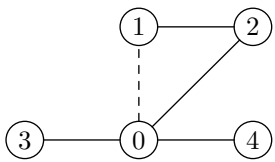}
\caption{\label{fig:ec}A counterexample to score monotonicity for eigenvector centrality. After adding the edge between $0$ and $1$, the score of $0$ decreases:
in norm $\ell_1$, from $0.30656$ to $0.29914$; in norm $\ell_2$, from $0.65328$ to $0.63586$; and when projecting the
constant vector $\mathbf1$ onto the dominant eigenspace, from $1.39213$ to $1.35159$.}
\end{figure}

Discussing score monotonicity requires some form of normalization, due to the presence of the scaling factor. In Figure~\ref{fig:ec} we show a very simple graph violating the property.
In particular, node $0$ score decreases after adding the arc $0\adj1$ both
in norm $\ell_1$ and norm $\ell_2$, and when projecting the constant vector $\mathbf1$ onto the dominant eigenspace, which is an alternative way of circumventing the
scaling factor~\cite{VigSR}. The intuition 
is that initially node $0$ has a high score because of its largest degree (three). However, once we close the triangle
we create a loop that absorbs a large amount of rank, effectively decreasing the score of $0$. We conclude that
\begin{theorem}
Eigenvector centrality does not satisfy weak score monotonicity, even on connected undirected graphs (using norm $\ell_1$, $\ell_2$, or projection onto the dominant eigenspace).
\end{theorem}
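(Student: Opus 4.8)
The plan is to exhibit an explicit finite counterexample and verify it by direct computation, since the statement is purely existential and the graph in Figure~\ref{fig:ec} is already described. First I would fix the small graph $G$ on (say) five or six vertices in which node $0$ has degree three and is adjacent to the two endpoints of a path that can be closed into a triangle; concretely, take $G$ to be the graph depicted in Figure~\ref{fig:ec}, write down its adjacency matrix $A$, and let $G'$ be the graph with the extra edge $0 \adj 1$, with adjacency matrix $A'$. The whole theorem reduces to comparing one coordinate of the dominant eigenvector of $A$ with the corresponding coordinate of the dominant eigenvector of $A'$, under each of the three normalizations.

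Next I would compute the Perron eigenvalue and (un-normalized) Perron eigenvector of $A$ and of $A'$. Because the graphs are small, these eigenvectors are algebraic numbers of low degree: the characteristic polynomials factor into small pieces, so one can either give closed forms or, more cheaply, give rational rigorous enclosures (interval arithmetic) tight enough to separate the two values. Then for each normalization I would renormalize: for $\ell_1$ divide each eigenvector by the sum of its entries; for $\ell_2$ divide by its Euclidean norm; for the projection normalization, replace the Perron vector $\bm v$ (with $\|\bm v\|_2 = 1$) by $(\bm v^{\!\top}\mathbf 1)\,\bm v$, i.e.\ the orthogonal projection of $\mathbf 1$ onto the dominant eigenline. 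In each of the three cases I would check that the $0$-coordinate strictly decreases from $G$ to $G'$, matching the numbers quoted in the caption ($0.30656 \to 0.29914$, $0.65328 \to 0.63586$, $1.39213 \to 1.35159$). Since a strict decrease at $x=0$ contradicts even the weak inequality $c_{G'}(x) \ge c_G(x)$ required by weak score monotonicity, and since $G$ is connected, this establishes the theorem for all three normalizations simultaneously.

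The only real obstacle is bookkeeping: making the numerical comparison \emph{rigorous} rather than merely floating-point. I would handle this by either (i) exhibiting the minimal polynomials of the relevant eigenvalue and eigenvector-coordinate and comparing the two algebraic numbers symbolically, or (ii) carrying out the eigenvalue/eigenvector computation in exact rational interval arithmetic with enough precision that the intervals for $c_G(0)$ and $c_{G'}(0)$ are disjoint. Given the gap between the quoted values (roughly $0.007$, $0.017$, and $0.04$ respectively), very modest precision suffices, so this step is routine once set up. The heuristic explanation — closing the triangle at $0$ creates a short cycle that "absorbs" Perron weight and thereby drains the $0$-coordinate relative to the rest — is not needed for the proof but guides the choice of $G$, and I would include it only as intuition.
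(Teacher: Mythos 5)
Your proposal is correct and takes essentially the same route as the paper: the paper's entire argument is the explicit counterexample of Figure~\ref{fig:ec}, verified by computing the dominant eigenvector before and after adding the edge $0\adj 1$ and checking that the $0$-coordinate decreases under all three normalizations. Your additional care about making the comparison rigorous (minimal polynomials or interval arithmetic) goes slightly beyond what the paper writes down, but the underlying argument is identical.
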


\begin{figure}
\centering
\includegraphics{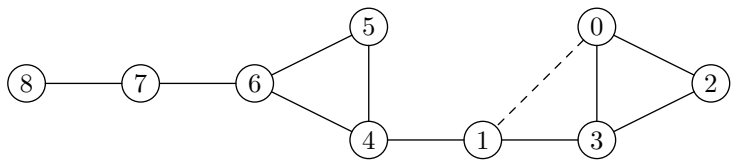}
\caption{\label{fig:ec2}A counterexample to rank monotonicity for eigenvector centrality.
Before adding the edge between $0$ and $1$, the score of $1$ is greater than the score of $3$; after, it is smaller.}
\end{figure}

A similar counterexample, shown in Figure~\ref{fig:ec2}, shows that eigenvector centrality does not satisfy rank monotonicity.
The scores of nodes $3$ and $1$ go from strictly increasing (without the edge $0\adj1$) to strictly decreasing (with the edge $0\adj1$);
thus, $1$ loses rank.
Note that in this case we do not have to choose a normalization, as the order of the two values does not change upon normalization. We conclude that  
\begin{theorem}
Eigenvector centrality does not satisfy weak rank monotonicity, even on connected undirected graphs.
\end{theorem}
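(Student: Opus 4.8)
The plan is to exhibit an explicit finite connected undirected graph $G$, together with a non-edge $x\adj y$, such that adding that edge reverses the order of the eigenvector-centrality scores of two other vertices. Since the claimed counterexample is already depicted in Figure~\ref{fig:ec2}, the proof is essentially a verification: I would first read off the adjacency matrix $G$ of the graph in the figure (a small graph on, say, half a dozen vertices), compute its dominant (Perron) eigenvector $\bv$, and identify the two vertices $1$ and $3$ whose scores satisfy $c_G(1) \ge c_G(3)$ — indeed strictly, $\bv_1 > \bv_3$. Then I would form $G'$ by adding the edge $0\adj1$, compute the new dominant eigenvector $\bv'$, and check that $\bv'_1 < \bv'_3$. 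Because $0$ and $1$ are non-adjacent in $G$ and $3 \ne 0,1$, this directly contradicts the rank-monotonicity conclusion $c_G(1)\ge c_G(3) \Rightarrow c_{G'}(1)\ge c_{G'}(3)$, and hence also the weak form.

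The key point that makes this clean — and that the paper already flags — is that reversing a strict inequality between two coordinates of the dominant eigenvector is a normalization-free statement: scaling $\bv$ or $\bv'$ by any positive constant does not affect the sign of $\bv_1 - \bv_3$. So, unlike the score-monotonicity counterexample, here I do not need to argue separately for $\ell_1$, $\ell_2$, and the $\mathbf 1$-projection normalizations; one computation suffices for all of them. Connectivity of both $G$ and $G'$ guarantees (by Perron–Frobenius) that the dominant eigenvector is unique up to scaling and strictly positive, so the scores $c_G(1), c_G(3), c_{G'}(1), c_{G'}(3)$ are well defined as soon as we fix any normalization.

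Concretely, the steps in order: (i) write down $G$ from Figure~\ref{fig:ec2} and its characteristic polynomial, or just its Perron eigenvalue $\lambda$ and eigenvector $\bv$, exhibiting $\bv_1 > \bv_3$; (ii) write down $G' = G + (0\adj1)$ and its Perron eigenvalue $\lambda'$ and eigenvector $\bv'$, exhibiting $\bv'_1 < \bv'_3$; (iii) observe that $0 \not\adj 1$ in $G$ and that $3 \notin \{0,1\}$, so the pair $(1,3)$ witnesses a failure of (weak) rank monotonicity; (iv) note the graph is connected, so the counterexample lies in the connected class as well. For a graph this small one can either solve the relevant low-degree polynomial equations exactly or, following the convention adopted for Figure~\ref{fig:ec}, simply report the numerical eigenvector entries to enough precision that the strict inequalities are unambiguous.

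The only real obstacle is making the verification self-contained and rigorous rather than merely numerical: the eigenvalues involved are roots of an irreducible polynomial of degree matching $n_G$, so exact closed forms for $\bv_1$ and $\bv_3$ are unwieldy. The cleanest rigorous route is to argue by a perturbation/continuity bound — or, more elementarily, to exhibit explicit rational vectors $\bu$ with $G\bu \le \lambda_0 \bu$ and $\bu'$ with $G'\bu' \ge \lambda_0' \bu'$ bracketing the Perron data tightly enough that the sign of each of $\bv_1 - \bv_3$ and $\bv'_1 - \bv'_3$ is forced. In practice, since the paper presents the analogous figure for score monotonicity with purely numerical values, I expect it is acceptable here to state the eigenvector entries to a few decimal places and observe that the strict inequalities $c_G(1) > c_G(3)$ and $c_{G'}(1) < c_{G'}(3)$ hold with a comfortable margin, which already establishes the theorem.
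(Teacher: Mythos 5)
Your proposal is correct and follows essentially the same route as the paper: exhibit the explicit connected graph of Figure~\ref{fig:ec2}, verify that adding the edge $0\adj1$ reverses the strict order of the dominant-eigenvector entries at vertices $1$ and $3$, and note that this order reversal is normalization-free. The paper treats this as a direct verification on the figure, just as you do.
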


\section{Seeley's index}

Seeley's index~\cite{SeeNRI} is simply the steady state of the natural (uniform)
random walk on the graph (for more details, see~\cite{BoVAC}). It is a
well-known fact that if the graph is connected the steady-state probability of
node $x$ is simply $d(x)/2m$---essentially, the index is just the $\ell_1$-normalized degree. We
will thus use this definition for all graphs. As a consequence:

\begin{theorem}
Seeley's index ($\ell_1$-normalized degree) is strictly rank monotone on undirected graphs.
\end{theorem}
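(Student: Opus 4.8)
The plan is to work directly from the closed-form definition of Seeley's index just stated: for every undirected graph $G$ and vertex $x$, we have $c_G(x) = d_G(x)/2m_G$, where $m_G = |E_G|$. Fix a graph $G$, two non-adjacent vertices $x,y$, and let $G'$ be obtained by adding the edge $x \adj y$. Then $m_{G'} = m_G + 1$, the degrees of $x$ and $y$ each increase by exactly one, and the degree of every other vertex is unchanged. So $c_{G'}(x) = (d_G(x)+1)/(2m_G+2)$, $c_{G'}(y) = (d_G(y)+1)/(2m_G+2)$, and $c_{G'}(z) = d_G(z)/(2m_G+2)$ for $z \neq x,y$.

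First I would verify strict rank monotonicity for the endpoint $x$ (the argument for $y$ is identical by symmetry). Suppose $z \neq x,y$ satisfies $c_G(x) \geq c_G(z)$, i.e.\ $d_G(x) \geq d_G(z)$. Then $d_G(x) + 1 \geq d_G(z) + 1 > d_G(z)$, and dividing through by the common positive denominator $2m_{G'} = 2m_G+2$ gives $c_{G'}(x) > c_{G'}(z)$, which is exactly what strict rank monotonicity demands. Note that the hypothesis is used only in the weak form $d_G(x)\ge d_G(z)$, and the strict conclusion comes for free from the ``$+1$'' in the numerator of $x$ versus the absent ``$+1$'' for $z$; this is the key observation, and it is why the statement can assert the strict version.

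The one point that requires a word of care — and the only real ``obstacle'' here — is the degenerate case $m_G = 0$: if $G$ has no edges at all, then $c_G$ is a ratio $0/0$ and is not well-defined, so the claimed formula presupposes $m_G \ge 1$. But adding the edge $x\adj y$ to an edgeless graph yields $m_{G'}=1$, so $c_{G'}$ is well-defined, and there are no vertices $z\ne x,y$ with $c_G(x)\ge c_G(z)$ for which the conclusion could fail (both sides of the implication concern $G$, where $c_G$ is undefined); one simply notes that the definition of rank monotonicity is vacuous or inapplicable in that boundary case, or restricts attention to graphs with at least one edge, matching the convention under which Seeley's index is defined. With that caveat dispatched, the proof is complete, since nothing anywhere used connectedness of $G$ — the closed-form ratio $d(x)/2m$ was adopted as the definition on all graphs — so the result holds on all undirected graphs, not merely the connected ones.
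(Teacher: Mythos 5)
Your proof is correct and is precisely the argument the paper leaves implicit (the paper states this theorem without proof, as an immediate consequence of defining Seeley's index as $d(x)/2m$): the common denominator $2m+2$ cancels, and the ``$+1$'' gained by the endpoint turns the weak hypothesis $d_G(x)\ge d_G(z)$ into the strict conclusion. Your handling of the degenerate edgeless case is a reasonable extra precaution and does not affect the result.
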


The situation is slightly different for score monotonicity:
\begin{theorem}
Seeley's index ($\ell_1$-normalized degree) is score monotone on undirected graphs, except in the case of a disconnected graph formed by a star graph and by one or more additional isolated vertices,
in which case it is just weakly score monotone.
\end{theorem}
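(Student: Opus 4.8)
The plan is to analyze directly how $\ell_1$-normalized degree $d(x)/2m$ changes when we add an edge $x \adj y$ between non-adjacent vertices. Write $G'$ for the augmented graph. Adding one edge increases $d(x)$ to $d(x)+1$, increases $d(y)$ to $d(y)+1$, leaves all other degrees unchanged, and increases the edge count from $m$ to $m+1$. So the new score of $x$ is $(d(x)+1)/(2m+2)$ and we must compare it with $d(x)/2m$. Cross-multiplying, $(d(x)+1)/(2m+2) > d(x)/2m$ iff $2m(d(x)+1) > 2(m+1)d(x)$ iff $2m > 2d(x)$ iff $d(x) < m$. The same computation gives that $y$'s score strictly increases iff $d(y) < m$, and weak monotonicity (with $\ge$) holds iff $d(x)\le m$, resp.\ $d(y)\le m$. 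So the whole question reduces to understanding when a vertex can have degree equal to or exceeding $m$, the number of edges.

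First I would observe that $d(x) \le m$ always holds, since the $d(x)$ edges incident to $x$ are among the $m$ edges of $G$; hence weak score monotonicity is immediate and unconditional. The remaining task is to characterize exactly when equality $d(x) = m$ occurs, because that is the only way strict monotonicity at $x$ can fail. If $d(x) = m$, then \emph{every} edge of $G$ is incident to $x$, so $G$ restricted to the vertices of positive degree is a star with center $x$; the other endpoints are leaves, and any further vertices are isolated. Conversely, in such a star-plus-isolated-vertices graph the center has degree $m$. So strict score monotonicity can fail at $x$ precisely when $G$ is a star with center $x$ (possibly with extra isolated vertices).

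Next I would check that even in this exceptional configuration the conclusion is only a mild weakening, and that it genuinely occurs. When $x$ is the center of a star on $m$ edges and we add the edge $x\adj y$: if $y$ is a leaf of the star, then $x$ and $y$ are already adjacent, so this case is vacuous; hence $y$ must be an isolated vertex (which forces the ``one or more additional isolated vertices'' clause). Then $d(y) = 0 < m$ (as $m \ge 1$), so $y$'s score \emph{strictly} increases, while $x$'s score stays exactly $m/(2m) = 1/2 = (m+1)/(2m+2)$, i.e.\ is unchanged — so monotonicity at $x$ is weak but not strict. In \emph{every} other connected or disconnected graph, adding an edge $x \adj y$ between non-adjacent $x,y$ has $d(x) < m$ and $d(y) < m$ (if $d(x)=m$ then $G$ is a star centered at $x$ and $y$ is isolated, the excepted case), so both scores strictly increase. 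This establishes the claimed dichotomy.

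I do not anticipate a serious obstacle: the argument is entirely elementary once the problem is reduced to the inequality $d(x) < m$. The only point requiring care is the bookkeeping of the exceptional case — making sure that ``star graph plus isolated vertices'' is stated so as to cover exactly the graphs where $d(x)=m$ for the relevant endpoint, and confirming that in that case the \emph{other} endpoint still enjoys strict monotonicity, so that the failure is only a failure of strictness at one end and not of weak monotonicity. One should also note explicitly that a single isolated vertex added to a single-edge graph ($m=1$, a ``star'' $K_2$) is included, and that the statement is about simple graphs so $y$ being non-adjacent to the center and non-isolated is impossible.
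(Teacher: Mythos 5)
Your proposal is correct and follows essentially the same argument as the paper: reduce strict monotonicity to the inequality $d(x)<m$ via the comparison of $(d(x)+1)/(2m+2)$ with $d(x)/2m$, and identify the equality case $d(x)=m$ as the star-plus-isolated-vertices configuration. The extra bookkeeping you do for the exceptional case (verifying the isolated endpoint still gains strictly) is a welcome refinement but not a different method.
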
 
\begin{proof}
When we add an edge between $x$ and $y$ in a graph with $m$ edges, the score of $x$ changes from $d(x)/2m$ to $(d(x)+1)/(2m+2)$. If we require
\[
\frac{d(x)+1}{2m+2}> \frac{d(x)}{2m}
\]
we obtain $d(x)< m$.  Since obviously $d(x)\leq m$, the condition is always true except when $d(x)=m$, which corresponds to
the case of a disconnected graph formed by a star graph and by additional isolated vertices.
Indeed, in that case adding an edge between an isolated vertex and the center of the star will not change the center's Seeley's index.\qed
\end{proof}

\section{Graph fibrations and spectral ranking}
\label{sec:fib}

It is known since seminal works from the '50s in the theory of \emph{graph divisors}~\cite{CDSSG} that fibrations~\cite{BoVGF}, defined below,
have an important relationship with eigenvalues and eigenvectors: if there is a fibration $f:G\to B$, the eigenvalues
of $G$ and $B$ are the same, modulo multiplicity, and eigenvectors of $G$ can be obtained from the eigenvectors of $B$. The
results extend to weighted graphs, too. In this section, we are going to extend such results to \emph{damped spectral rankings}~\cite{VigSR} of the form
\[
\bm v\sum_{i=0}^\infty \beta^iM^i = \bm v( 1 - \beta M)^{-1},
\]
where $M$ is the weighted adjacency matrix of a graph, $\beta$ is a parameter satisfying the condition
$0\leq\beta<1/\rho(M)$, $\rho(M)$ is the spectral radius of $M$, and $\bm v$ is a \emph{preference vector}: 
Katz's index~\cite{KatNSIDSA}, Hubbell's index~\cite{HubIOACI} and PageRank~\cite{PBMPCR} are all examples of damped spectral rankings.

While determining a damped spectral ranking for a \emph{specific} graph essentially requires solving a system of linear equations,
possibly approximating its solution with an iterative method,
doing that for \emph{parametric families} of graphs is tricky and often requires
\emph{ad hoc} approaches. Nonetheless, when the graphs under consideration are sufficiently symmetric, one can try to reduce
the computation using a technique based on fibrations. The idea was introduced in~\cite{BLSGFGIP} for random walks
with restart, and in this section we will extend it to general damped spectral rankings, providing
thus a self-contained (and, in fact, simpler) proof.

Let us start with some additional definitions.
A \emph{path} (of length $n\geq 0$) is a sequence $\pi=\la i_0 a_1 i_1 \cdots
i_{n-1} a_n i_n\ra$, where $i_k\in N_G$, $a_k\in A_G$, $s(a_k)=i_{k-1}$ and
$t(a_k)=i_k$. 
We define $s(\pi)=i_0$ (the \emph{source} of $\pi$), $t(\pi)=i_n$ (the \emph{target} of $\pi$), 
$|\pi|=n$ (the \emph{length} of $\pi$) and let $G^*(i,j) = \{\, \pi
\mid s(\pi)=i, t(\pi)=j\,\}$ (the
set of paths from $i$ to $j$). 


A \emph{(graph) morphism} $f:G\to H$ is given by a pair of functions
$f_N:N_G\to N_H$ and $f_A:A_G\to A_H$ commuting with the source and
target maps, that is, $s_H(f_A(a))=f_N(s_G(a))$ and $t_H(f_A(a))=f_N(t_G(a))$ for all
$a \in A_G$ (again, we shall drop the subscripts whenever no confusion is possible). In other
words, a morphism maps nodes to nodes and arcs to arcs in such a way to
preserve the incidence relation.  
The definition of morphism we give here is the obvious extension to the case of multigraphs of the standard notion the
reader may have met elsewhere.
An \emph{epimorphism} is a morphism $f$ such that both $f_N$ and $f_A$ are surjective.

A \emph{fibration}~\cite{BoVGF} between the graphs $G$ and $B$ is a morphism $f: G\to B$ such
that for each arc $a\in A_B$ and each node $i\in N_G$ satisfying
$f(i)=t(a)$ there is a unique arc $\lift ai\in A_G$ (called the \emph{lifting of
$a$ at $i$}) such that $f(\lift ai)=a$ and $t(\lift ai)=i$.
If $f:G\to B$ is a fibration, $G$
is called the \emph{total graph} and $B$ the \emph{base} of $f$. 
We shall also say that $G$ is \emph{fibered (over $B$)}. The \emph{fiber over a
node $h\in N_B$} is the set of nodes of $G$ that are mapped to $h$, and shall
be denoted by $f^{-1}(h)$. 

A more geometric way of interpreting the definition of
fibration is that given a node $h$ of $B$ and a path $\pi$ terminating at $h$,
for each node $i$ of $G$ in the fiber of $h$ there is a unique path terminating
at $i$ that is mapped to $\pi$ by the fibration; this path is called the
\emph{lifting of $\pi$ at $i$}. 

In Figure~\ref{fig:exfib}, we show two graph morphisms; the morphisms are
implicitly described by the colors on the nodes. The morphism displayed on the
left is not a fibration, as the loop
on the base has no counterimage ending at the lower gray node, and
moreover the other arc has two counterimages with the same target. The
morphism displayed on the
right, on the contrary, is a fibration. Observe that loops are not necessarily
lifted to
loops.

\begin{figure}[htbp]
  \begin{center}
	\includegraphics{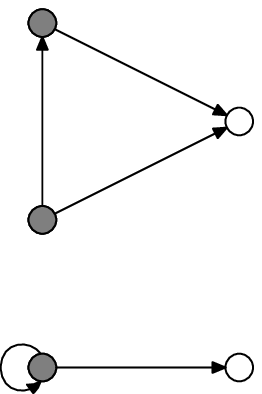}\qquad\qquad\qquad\qquad\includegraphics{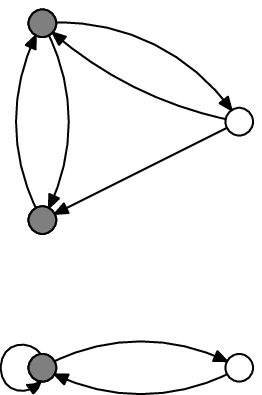}
  \end{center}
  \caption{\label{fig:exfib}On the left, an example of graph morphism that is
not a fibration; on the right, a fibration. Colors on the nodes are used to
implicitly specify the morphisms.}
\end{figure}

We will now show how fibrations can be of help in the computation of a damped spectral ranking.
First of all, we are now going to consider \emph{weighted} graphs, in which each arc is assigned
a real weight, given by a weighting function $w:A_G\to \mathbf R$: the adjacency matrix associated to
a weighted graph $G$  is defined by letting
\[
	G_{ij}=\sum_{a \in G(i,j)} w(a)
\]
and we obtain the unweighted case when $w$ is the constant $a \mapsto 1$ function.
All the morphisms (especially, fibrations) between weighted graphs are assumed to preserve weights. 

Note that every morphism $f: G \to B$ extends to a mapping $f^*$ between paths of $G$ and paths of $B$
in an obvious way. This map $f^*$ preserves not only path lengths, but also weight sequences if $f$ does.

For fibrations we can say more; using the lifting property, one can prove by induction that:
\begin{theorem}
\label{thm:bij}
If $f: G \to B$ is an epimorphic fibration between weighted graphs,
then for every two nodes $j \in N_G$ and $k \in N_B$ the map $f^*$ is a bijection 
between $\cup_{i \in f^{-1}(k)} G^*(i,j)$ and $B^*(k,f(j))$.
\end{theorem}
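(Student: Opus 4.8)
The plan is to reduce the claim to the \emph{path-lifting property} of fibrations and then prove that property by induction on path length; everything else is bookkeeping. Two preliminary remarks. First, the union $\bigcup_{i\in f^{-1}(k)}G^*(i,j)$ is in fact disjoint, since paths with different sources are different. Second, $f^*$ really does send this union into $B^*(k,f(j))$: a morphism commutes with the source and target maps, so a path $\pi\in G^*(i,j)$ with $f(i)=k$ is mapped to a path from $f(i)=k$ to $f(j)$. It thus remains to show that this restricted map is injective and surjective, and both facts will drop out of the uniqueness and the existence halves of path lifting.

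The core lemma to establish is: for every path $\rho=\la h_0\,b_1\,h_1\cdots h_{n-1}\,b_n\,h_n\ra$ in $B$ and every node $i\in f^{-1}(h_n)$, there is a unique path $\widetilde\rho$ in $G$ with $t(\widetilde\rho)=i$ and $f^*(\widetilde\rho)=\rho$. I would prove this by induction on $n$. For $n=0$ the only candidate is $\la i\ra$, and it works (its image is $\la f(i)\ra=\la h_0\ra=\rho$). For $n>0$, split $\rho$ into its length-$(n-1)$ prefix $\rho'$, which ends at $h_{n-1}$, followed by the arc $b_n$. Any lift $\widetilde\rho$ with $t(\widetilde\rho)=i$ has a last arc that is mapped to $b_n$ and has target $i$; by the defining property of a fibration there is exactly one such arc, namely $\lift{b_n}{i}$. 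Hence the last arc of $\widetilde\rho$ is forced, and its source is $s(\lift{b_n}{i})$, which lies in $f^{-1}(h_{n-1})$ because $f\bigl(s(\lift{b_n}{i})\bigr)=s\bigl(f(\lift{b_n}{i})\bigr)=s(b_n)=h_{n-1}$. The rest of $\widetilde\rho$ is then a lift of $\rho'$ with target $s(\lift{b_n}{i})$, which by the inductive hypothesis exists and is unique; concatenating it with $\lift{b_n}{i}$ produces the required $\widetilde\rho$, and uniqueness is immediate from the construction.

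Injectivity now follows because any two paths $\pi_1,\pi_2$ in the union with $f^*(\pi_1)=f^*(\pi_2)=\rho\in B^*(k,f(j))$ both have target $j\in f^{-1}(f(j))=f^{-1}(t(\rho))$ and are therefore both lifts of $\rho$ at $j$, hence equal. Surjectivity follows because, given $\rho\in B^*(k,f(j))$, the node $j$ lies in $f^{-1}(t(\rho))$, so the lemma yields a path $\widetilde\rho$ in $G$ with $t(\widetilde\rho)=j$ and $f^*(\widetilde\rho)=\rho$; its source $i:=s(\widetilde\rho)$ satisfies $f(i)=s(f^*(\widetilde\rho))=s(\rho)=k$, so $\widetilde\rho$ belongs to $G^*(i,j)$ with $i\in f^{-1}(k)$ and maps to $\rho$.

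There is no serious obstacle here: the whole substance is the inductive lifting argument, and the only point demanding a little care is verifying at each step that the source of the lifted terminal arc falls in the intended fiber — which is exactly the morphism identity $f\circ s=s\circ f$. I would also note in passing that $f^*$ preserves path length and, since fibrations of weighted graphs preserve weights, the entire weight sequence along a path; this plays no role in the bijection itself but is what makes Theorem~\ref{thm:bij} usable for transporting damped spectral rankings between $G$ and $B$ in the subsequent results.
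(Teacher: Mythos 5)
Your proof is correct and follows exactly the route the paper indicates (the paper only remarks that the theorem follows ``by induction using the lifting property''): you establish unique path lifting by induction on length and then read off injectivity and surjectivity. Your write-up simply fills in the details the paper leaves implicit, so there is nothing to add.
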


Now, for every $t\geq 0$, $G^t$ is the matrix whose $ij$ entry contains a summation of contributions, one for each path $\pi \in G^*(i,j)$, 
and the contribution is given by the product of the arc weights found along the way;
hence, by Theorem~\ref{thm:bij}, under convergence assumptions we have that for all $\beta$ and all $i \in N_G$ and $k \in N_B$
\[
	\sum_{i \in f^{-1}(k)}\left(\sum_{t \geq 0} \beta^t G^t\right)_{ij} = \left(\sum_{t \geq 0} \beta^t B^t\right)_{kf(j)},
\]
or equivalently
\[	
\sum_{i \in f^{-1}(k)}\left((1-\beta G)^{-1}\right)_{ij} = \left((1-\beta B)^{-1}\right)_{kf(j)}.
\]
Now, for every vector\footnote{All vectors in this paper are row vectors.} $\bm u$ of size $n_B$, define its \emph{lifting along $f$} as
the vector $\bm u^f$  of size $n_G$ given by
\[
	\left(u^f\right)_i=u_{f(i)}.
\]
For every $j$, we have
\begin{multline*}
	\left(\bm u^f (1-\beta G)^{-1}\right)_j
	=\sum_{i\in N_G} u^f_i \left((1-\beta G)^{-1}\right)_{ij}
	=\\
	=\sum_{k\in N_B}\sum_{i \in f^{-1}(k)} u_{f(i)} \left((1-\beta G)^{-1}\right)_{ij}
	=\sum_{k\in N_B}u_k\left(\sum_{i \in f^{-1}(k)} \left((1-\beta G)^{-1}\right)\right)_{ij}
	=\\
	=\sum_{k\in N_B}u_k\left((1-\beta B)^{-1}\right)_{kf(j)}
	=\left(\bm u (1-\beta B)^{-1}\right)_{f(j)}
\end{multline*}
which can be more compactly written as
\begin{equation}
\label{eqn:resumedsp}
	\bm u^f (1-\beta G)^{-1}=\left(\bm u (1-\beta B)^{-1}\right)^f.
\end{equation}

Equation (\ref{eqn:resumedsp}) essentially states that if we want to compute the damped spectral ranking of the weighted graph $G$,
for a preference vector that is
constant along the fibers of an epimorphic fibration $f: G \to B$, and thus of the form $\bm u^f$, we can compute the damped spectral ranking of the weighted base $B$ 
using $\bm u$ as preference vector, and then lift along $f$ the result. For example, in the
case of Katz's index a simple fibration between graphs is sufficient, as in that case there are no weights to deal with.

%

\paragraph{Implications for PageRank.}
PageRank~\cite{PBMPCR} can be defined as
\[
(1-\alpha)\bm v\sum_{i=0}^\infty \alpha^i\bar G^i = (1-\alpha) \bm v(1-\alpha \bar G)^{-1},
\]
where $\alpha\in[0\..1)$ is the damping factor, 
$\bm v$ is a non-negative preference vector with unit $\ell_1$-norm, 
and $\bar G$ is the row-normalized version\footnote{Here we are assuming that $G$ has no \emph{dangling nodes} (i.e., nodes with outdegree $0$).
If dangling nodes are present, you can still use this definition (null rows are left untouched in $\bar G$), but then to obtain PageRank you 
need to normalize the resulting vector~\cite{BSVPFD,DCGRFPCSLS}. So all our discussion can also be applied
to graphs with dangling nodes, up to $\ell_1$-normalization.} of $G$.

Then $\bar G$ is just the (adjacency matrix of the) weighted version of $G$ defined by letting $w(a)=1/d^+_G(s_G(a))$.
Hence, if you have a weighted graph $B$,
an epimorphic weight-preserving fibration $f: \bar G \to B$, and a vector $\bm u$ of size $n_B$ such that 
$\bm u^f$ has unit $\ell_1$-norm, you can deduce from (\ref{eqn:resumedsp}) that
\begin{equation}
\label{eqn:resumepr}
 (1-\alpha) \bm u^f (1-\alpha \bar G)^{-1}=\left(\bm (1-\alpha) \bm u (1-\alpha B)^{-1}\right)^f.
\end{equation}

On the left-hand side you have the actual PageRank of $G$ for a preference vector that is fiberwise constant;
on the right-hand side you have a spectral ranking of $B$ for the projected preference vector.
Note that $B$ is not row-stochastic, and $\bm u$ has not unit $\ell_1$-norm, 
so technically the right-hand side of equation (\ref{eqn:resumepr}) is
not PageRank anymore, but it is still a damped spectral ranking.

\section{PageRank}

Armed with the results of the previous section, we attack the case of PageRank, which is the most interesting. The first observation
is that
\begin{theorem}
Given an undirected graph $G$ there is a value of $\alpha$ for which PageRank is strictly rank monotone on $G$.
The same is true for score monotonicity, except when $G$ is formed by a star graph and by one or more additional isolated vertices.
\end{theorem}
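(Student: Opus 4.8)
The plan is to choose the damping factor $\alpha$ very close to $0$, where PageRank is a small perturbation of the uniform vector and its behaviour under an edge insertion can be read off from a first-order expansion. First I would record that, for a fixed graph $G$, every coordinate of the PageRank vector is a rational function of $\alpha$ with no pole on $[0,1)$: the matrix $1-\alpha\bar G$ is invertible there since $\rho(\bar G)\le 1$, and the $\ell_1$-renormalization that handles the isolated (dangling) vertices divides by $\sum_z\bigl((1-\alpha)\frac{\1}{n}(1-\alpha\bar G)^{-1}\bigr)_z$, which is analytic and strictly positive on $[0,1)$. Hence each $p_G(z)(\alpha)$ is real-analytic on $[0,1)$ and equals $1/n$ at $\alpha=0$, for every vertex $z$ and every graph $G$. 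Writing $\delta(z)(\alpha):=p_{G'}(z)(\alpha)-p_G(z)(\alpha)$ for the change produced by adding the edge $x \adj y$, the function $\delta(z)$ is analytic on $[0,1)$ and vanishes at $0$.

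The computational heart is the coefficient of $\alpha$ in these functions. From $(1-\alpha)\frac{\1}{n}(1-\alpha\bar G)^{-1}=\frac{\1}{n}+\frac{\alpha}{n}\,\1(\bar G-I)+O(\alpha^2)$, renormalizing gives $p_G(z)(\alpha)=\frac1n+\frac{\alpha}{n}\bigl(\sigma_G(z)-1+\frac{n_0}{n}\bigr)+O(\alpha^2)$, where $\sigma_G(z)$ is the sum of $1/d(i)$ over the neighbours $i$ of $z$ and $n_0$ is the number of isolated vertices. Since $G$ and $G'$ differ only at the pair $x,y$, which is non-adjacent in $G$, the only change at $x$ is the new neighbour $y$, contributing $\frac{1}{d_G(y)+1}$; a brief case analysis then shows that the coefficient of $\alpha$ in $\delta(x)$ is $\frac{1}{d_G(y)+1}+\frac{n_0'-n_0}{n}$, while for every $z\neq x,y$ the coefficient of $\alpha$ in $\delta(z)$ is at most $\frac{n_0'-n_0}{n}\le 0$. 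Crucially the normalization term $\frac{n_0'-n_0}{n}$, with $n_0'-n_0\in\{0,-1,-2\}$, is the same for all vertices, hence cancels in any difference: the coefficient of $\alpha$ in $\delta(x)-\delta(z)$ is at least $\frac{1}{d_G(y)+1}>0$, and symmetrically for $\delta(y)-\delta(z)$. Therefore, for all small enough $\alpha>0$, inserting the edge strictly raises the PageRank of each endpoint by strictly more than that of any third vertex.

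Rank monotonicity then follows by a finiteness argument. For a fixed triple $x,y,z$ write $p_{G'}(x)-p_{G'}(z)=(p_G(x)-p_G(z))+(\delta(x)-\delta(z))$; the first summand is analytic in $\alpha$, hence either identically $0$ or of constant sign on some interval $(0,\alpha_0)$. In the two cases where it is non-negative near $0$ — exactly the cases where the hypothesis $p_G(x)\ge p_G(z)$ holds for small $\alpha$ — adding the strictly positive $\delta(x)-\delta(z)$ gives $p_{G'}(x)>p_{G'}(z)$; in the remaining case the hypothesis simply fails, and there is nothing to check. Taking $\alpha$ below the minimum of the finitely many thresholds arising from all triples, and from the symmetric estimates with $y$ in place of $x$, yields one value of $\alpha$ for which PageRank is strictly rank monotone on $G$.

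Score monotonicity uses the same scheme but now requires the stronger conclusion $\delta(x)(\alpha)>0$ and $\delta(y)(\alpha)>0$ on their own, so the relevant sign is that of the coefficient $\frac{1}{d_G(y)+1}+\frac{n_0'-n_0}{n}$, which no longer enjoys the cancellation of the normalization term. The remaining — and main — work is to determine exactly when, for some admissible edge $x \adj y$, this leading coefficient can fail to be positive (so the first-order test is inconclusive), and then to compute $\delta$ exactly in each such situation and show that some endpoint's PageRank is in fact unchanged for every $\alpha$, so that only weak score monotonicity holds. Carrying out this analysis singles out the exceptional family as precisely the graphs formed by a star plus one or more isolated vertices. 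The whole difficulty sits in the bookkeeping forced by the isolated vertices and the $\ell_1$-renormalization they entail: it is inert for rank monotonicity but is exactly what produces the single exception in the score-monotone case.
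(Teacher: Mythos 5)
Your route is genuinely different from the paper's. The paper lets $\alpha\to 1$, where PageRank converges to Seeley's index, and imports the already-proved strict rank monotonicity (and score monotonicity) of Seeley's index, obtaining one threshold $\alpha_{xy}$ per non-adjacent pair and taking the maximum; you instead let $\alpha\to 0$ and read everything off the first-order term of $(1-\alpha)\frac{\1}{n}(1-\alpha\bar G)^{-1}$. For the rank-monotonicity half your argument is correct and more self-contained: the expansion $p_G(z)=\frac1n+\frac{\alpha}{n}\bigl(\sigma_G(z)-1+\frac{n_0}{n}\bigr)+O(\alpha^2)$ is right, the normalization term cancels in differences, $\sigma_{G'}(z)-\sigma_G(z)\le 0$ for $z\ne x,y$ while $\sigma_{G'}(x)-\sigma_G(x)=\frac{1}{d_G(y)+1}>0$, and the finiteness argument over triples (treating separately the cases where $p_G(x)-p_G(z)$ is eventually nonnegative or eventually negative near $0$) is sound. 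It avoids relying on the external convergence-to-Seeley fact, at the price of some perturbation bookkeeping the paper never needs.

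The score-monotonicity half, however, has a genuine gap: the ``remaining --- and main --- work'' is not carried out, and the outcome you predict for it is wrong. The coefficient $\frac{1}{d_G(y)+1}+\frac{n_0'-n_0}{n}$ is in fact strictly positive for \emph{every} non-adjacent pair in \emph{every} graph on $n\ge 3$ vertices: if neither endpoint is isolated it equals $\frac{1}{d_G(y)+1}$; if exactly one is, it is at least $\frac{1}{n-1}-\frac1n>0$ (since $d_G(y)\le n-2$); if both are, it equals $1-\frac2n$. So your first-order test is inconclusive only for the two-vertex empty graph, and it does not single out the star-plus-isolated-vertices family at all. Moreover, in that family the claim that ``some endpoint's PageRank is in fact unchanged for every $\alpha$'' is false: for a star with $m$ leaves plus $n_0$ isolated vertices the normalized center score is $\frac{1+\alpha m}{(1+\alpha)(n-\alpha n_0)}$ before and $\frac{1+\alpha(m+1)}{(1+\alpha)(n-\alpha n_0+\alpha)}$ after joining the center to an isolated vertex, and the latter is strictly larger for every $\alpha\in(0,1)$; the two only coincide in the limit $\alpha\to1$, which is why the paper's Seeley-based \emph{proof} (not the fact itself) forces the exception. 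Since the theorem asserts nothing in the exceptional case, correctly completing your first-order analysis would actually yield the score statement (indeed a slightly stronger one), but as written the decisive case analysis is missing and its announced conclusion is incorrect.
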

\begin{proof}
We know that for $\alpha\to 1$, PageRank tends to Seeley's index~\cite{BSVPFDF}. Since Seeley's index is strictly rank monotone, for each non-adjacent pair 
of vertices $x$ and $y$ there is a value $\alpha_{xy}$ such that for $\alpha\geq\alpha_{xy}$ adding the edge $x\adj y$ is strictly
rank monotone. The proof is completed by taking $\alpha$ larger than all $\alpha_{xy}$'s. 
The result for score monotonicity is similar.\qed  
\end{proof}

On the other hand, we will now show that 
\emph{for every possible value of the damping factor $\alpha$} there is a graph on which PageRank violates rank and score monotonicity.

The basic intuition of our proof is that when you connect a high-degree node $x$ with a low-degree node $y$, $y$ will pass to $x$
a much greater fraction of its score than in the opposite direction. This phenomenon is caused by the stochastic normalization of the
adjacency matrix: the arc from $x$ to $y$ will have a low coefficient, due to the high degree of $x$, whereas the arc from $y$ to $x$
will have a high coefficient, due to the low degree of $y$.

We are interested in a parametric example, so that we can tune it for different values of $\alpha$. At the same time, we want to 
make the example analytic, and avoid resorting to numerical computations, as that approach would make it impossible to prove a result
valid for every $\alpha$---we would just, for example, prove it for a set of samples in the unit interval.

\begin{figure}
\centering
\begin{tabular}{cc}
\raisebox{.5cm}{$G_k$\qquad}&\includegraphics{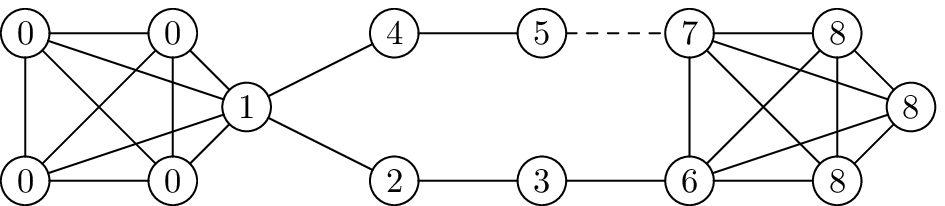}\\
\raisebox{.5cm}{$B_k$\qquad}&\includegraphics{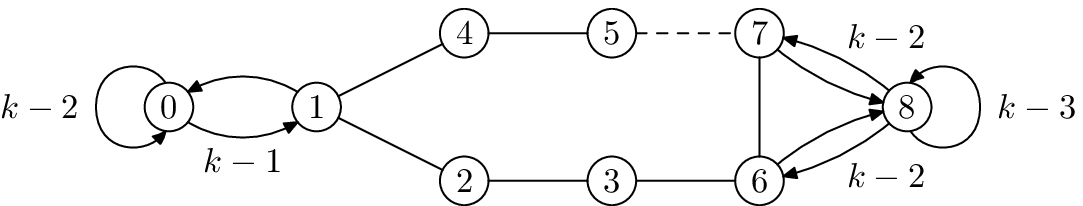}
\end{tabular}
\caption{\label{fig:pr}The parametric counterexample graph for PageRank. The two $k$-cliques are represented here as $5$-cliques for simplicity. Arc
labels represent multiplicity; weights are induced by the uniform distribution on the upper graph.}
\end{figure}

We thus resort to fibrations, using equation (\ref{eqn:resumepr}). In Figure~\ref{fig:pr} we show a parametric
graph $G_k$ comprising two $k$-cliques (in the figure, $k=5$). 
Below, we show the graph $B_k$ onto which $G_k$ can be fibred by mapping
nodes following their labels. The dashed edge is the addition that we will study:
the fibration exists whether the edge exists or not (in both graphs).

While $G_k$ has $2k+4$ vertices, $B_k$ has $9$ vertices, independently of $k$, and
thus its PageRank can be computed analytically as rational functions of $\alpha$ whose coefficients are rational functions
in $k$ (as the number of arcs of each $B_k$ is different). The adjacency matrix of $B_k$ without the dashed arc, considering multiplicities, is\footnote{Note that in the published version of this paper~\cite{BFVSRMUN}
the denominators of the second row are $k-1$, mistakenly, instead of $k+1$.}
\[
\left(\begin{matrix}
    \frac{k-2}{k-1}& \frac{k-1}{k-1}& 0& 0& 0& 0& 0& 0& 0\\
    \frac1{k+1}& 0& \frac1{k+1}& 0& \frac1{k+1}& 0& 0& 0& 0\\
    0&   \frac12& 0& \frac12& 0& 0& 0& 0& 0\\
    0&   0& \frac12& 0& 0& 0& \frac12& 0& 0\\
    0&   \frac12& 0& 0& 0& \frac12& 0& 0& 0\\
    0&   0& 0& 0& 1& 0& 0 &\fcolorbox{gray}{gray}{0}& 0\\
    0&   0& 0& \frac1k& 0& 0& 0& \frac1k& \frac1k\\
    0&   0& 0& 0& 0& \fcolorbox{gray}{gray}{0}&  \frac1{k-1}& 0& \frac1{k-1}\\
    0&   0& 0& 0& 0& 0& \frac{k-2}{k-1}& \frac{k-2}{k-1}& \frac{k-3}{k-1}\\
\end{matrix}\right)
\]
After adding the edge between $5$ and $7$ we must modify the matrix by setting the two grayed 
entries to one and fix normalization accordingly. We will denote with $\operatorname{pre}_\alpha(x)$ the rational function returning the PageRank of
node $x$ with damping factor $\alpha$ before the addition of the dashed arc, and 
with $\operatorname{post}_\alpha(x)$ the rational function returning the PageRank of
node $x$ with damping factor $\alpha$ after the addition of the dashed arc. 

We use the Sage computational engine~\cite{Sage} to perform all computations, as
the resulting rational functions are quite formidable.\footnote{The Sage worksheet can be found at \url{https://vigna.di.unimi.it/pagerank.ipynb}.} 
We start by considering
node $5$: evaluating $\operatorname{post}_\alpha(5)-\operatorname{pre}_\alpha(5)$ in $\alpha=2/3$ we obtain
a negative value for all $k\geq 12$, showing
there is always a value of $\alpha$ for which node $5$ violates weak score monotonicity, as long as $k\geq 12$. 

To strengthen our results, we are now going to show that for \emph{every} $\alpha$ there is a $k$ such that
weak score monotonicity is violated. We use Sturm polynomials~\cite{RaSATP} to compute the number of sign changes of the 
numerator $p(\alpha)$ of  $\operatorname{post}_\alpha(5)-\operatorname{pre}_\alpha(5)$ 
for $\alpha\in[0\..1]$, as the denominator cannot have zeros. Sage
reports that there are two sign changes for $k\geq 12$, which means that $p(\alpha)$ is initially positive; then, somewhere before $2/3$
it becomes negative; and finally it returns positive again somewhere after $2/3$.

Determining the behavior of the points at which $p(\alpha)$ changes sign is impossible due to the high degree of the polynomials
involved. However, we can take two suitable parametric points in the unit interval that sandwich $2/3$, such as 
\[
a = \frac34 - \frac{3k}{4k+ 1000} \leq \frac23 \leq \frac12+\frac{k}{2k+1000} = b,
\]
and use again Sturm polynomials to count the number of sign changes in $[0\..a]$ and $[b\..1]$. In both cases, if $k\geq 15$
there is exactly one sign change in the interval, and since $a\to0$, $b\to 1$ as $k\to \infty$, we conclude that as $k$ grows
the size of the interval of $\alpha$'s in which $p(\alpha)<0$ grows, approaching $[0\..1]$ in the limit. Thus,
\begin{theorem}
For every value of $\alpha\in[0\..1)$, there is an undirected graph for which PageRank violates weak score monotonicity
when $\alpha$ is chosen as damping factor.
\end{theorem}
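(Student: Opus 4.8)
The plan is to exhibit, for each fixed $\alpha\in[0\..1)$, a concrete member of the parametric family $G_k$ on which adding the dashed edge $5\adj 7$ makes the PageRank of node $5$ strictly decrease. Thanks to equation~(\ref{eqn:resumepr}), it suffices to work with the $9$-vertex base graph $B_k$: the PageRank of node $5$ in $G_k$ is (after the global $\ell_1$-normalization induced by the fiber sizes) proportional to the corresponding coordinate of the damped spectral ranking of $B_k$, with the preference vector $\bm u$ being the projection of the uniform preference vector. Since $B_k$ is fixed-size, its ranking vector is the solution of a $9\times9$ linear system whose entries are the explicit rational functions of $\alpha$ and $k$ displayed above (before the edge) and their grayed-out modification (after the edge). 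Solving the system symbolically yields $\operatorname{pre}_\alpha(5)$ and $\operatorname{post}_\alpha(5)$ as rational functions in $\mathbf{Q}(\alpha,k)$, and we set $p(\alpha)=p_k(\alpha)$ to be the numerator of $\operatorname{post}_\alpha(5)-\operatorname{pre}_\alpha(5)$, which we may clear so that its sign on $[0\..1)$ is exactly the sign of the difference (the denominator has no zeros in that range, being a product of shifted resolvents).

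The core of the argument is a sign-change count for $p_k$ on $[0\..1]$ via Sturm's theorem. First I would verify, by computing the Sturm chain of $p_k$ over $\mathbf{Q}(k)$ and specializing, that for all $k\ge 12$ there are exactly two sign changes of $p_k$ in $[0\..1]$, so that $p_k$ starts positive, dips negative on some subinterval, and returns positive; together with the direct evaluation $p_k(2/3)<0$ for $k\ge 12$, this pins $2/3$ inside the negative interval. To localize the negative interval as $k\to\infty$, I introduce the explicit rational endpoints
\[
a_k=\frac34-\frac{3k}{4k+1000},\qquad b_k=\frac12+\frac{k}{2k+1000},
\]
which satisfy $a_k\le 2/3\le b_k$ for all $k\ge 0$ and $a_k\to 0^+$, $b_k\to 1^-$. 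A second pair of Sturm computations — counting sign changes of $p_k$ on $[0\..a_k]$ and on $[b_k\..1]$ — shows that for $k\ge 15$ each of these two intervals contains exactly one root of $p_k$. Hence for $k\ge 15$ the unique negative arc of $p_k$ is contained in $(a_k,b_k)$ on one side and contains $(a_k,b_k)$... more precisely: $p_k$ has exactly one root in $(0,a_k)$ and exactly one in $(b_k,1)$, and is negative at $2/3$, so $p_k<0$ on the whole open interval $(a_k,b_k)\supseteq$ a neighborhood of $2/3$, and in fact $p_k<0$ precisely between its unique root in $(0,a_k)$ and its unique root in $(b_k,1)$. Thus the set $\{\alpha\in[0\..1):p_k(\alpha)<0\}$ is an interval whose closure contains $[a_k\..b_k]$.

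To finish: given an arbitrary target $\alpha_0\in[0\..1)$, pick $k\ge 15$ large enough that $a_k<\alpha_0$ and $b_k>\alpha_0$ — possible since $a_k\to0$ and $b_k\to1$. Then $p_k(\alpha_0)<0$, i.e.\ $\operatorname{post}_{\alpha_0}(5)<\operatorname{pre}_{\alpha_0}(5)$ in $B_k$, and by~(\ref{eqn:resumepr}) the PageRank of node $5$ in $G_k$ with damping factor $\alpha_0$ strictly decreases upon adding the edge $5\adj 7$. Since $5$ is an endpoint of the added edge, weak score monotonicity fails for $G_k$ at damping factor $\alpha_0$, proving the theorem. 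The main obstacle is entirely computational: the rational functions $\operatorname{pre}_\alpha(5)$ and $\operatorname{post}_\alpha(5)$ and hence $p_k$ have large degree in both $\alpha$ and $k$, so the Sturm chains are unwieldy and must be handled by a computer algebra system; the delicate part of the \emph{mathematical} bookkeeping is choosing $a_k,b_k$ with closed-form expressions simple enough that the specialized Sturm counts remain decidable while still sandwiching $2/3$ and sweeping out to the endpoints of $[0\..1]$ in the limit.
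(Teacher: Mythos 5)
Your proposal is correct and follows essentially the same route as the paper: the same family $G_k$ fibred onto the $9$-vertex base $B_k$, the same symbolic computation of $\operatorname{post}_\alpha(5)-\operatorname{pre}_\alpha(5)$, the same Sturm-chain count of two sign changes for $k\ge 12$, and the same sandwich points $a_k=\tfrac34-\tfrac{3k}{4k+1000}$ and $b_k=\tfrac12+\tfrac{k}{2k+1000}$ with one root localized on each side for $k\ge 15$. The only addition is your slightly more explicit bookkeeping that the negative arc of $p_k$ contains $[a_k\..b_k]$, which is a correct and welcome clarification of the paper's argument.
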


We now use the same example to prove the lack of rank monotonicity. In this case, we study in a similar way
$\operatorname{pre}_\alpha(5)-\operatorname{pre}_\alpha(2)$, which is positive in $\alpha=2/3$
if $k\geq 14$. Its numerator has two sign changes in the unit interval,
which means that initially $5$ has a smaller PageRank than $2$; then, somewhere before $2/3$
$5$ starts having a larger PageRank than $2$; finally, we return to the initial condition.

 Once again, we sandwich $2/3$ using
\[
\frac34 - \frac{3k}{4k+200}\leq \frac23\leq \frac12 + \frac{k}{2k+200},
\]
and with an argument analogous to the case of score monotonicity we conclude that as $k$ grows the subinterval
of values of $\alpha$ in $[0\..1)$ for which the score of $5$ is greater than the score of $2$ grows up to the whole interval.

Finally, we study $r(\alpha)=\operatorname{post}_\alpha(5)-\operatorname{post}_\alpha(2)$ which is negative in $\alpha=2/3$
if $k\geq 6$, and whose numerator has three sign changes in the unit interval. Once again, we sandwich $2/3$ using
\[
a = \frac1{10} - \frac{k}{10k+2000}\leq \frac23\leq \frac12 + \frac{k}{2k+200} = b.
\]
In this case, there are always two sign changes in $[0\..a]$ and one sign change in $[b\..1]$ for $k\geq 25$, so 
there is a subinterval of values of $\alpha$ in $[0\..1)$ for which the score of $5$ is smaller than the score of $2$ after adding the
edge $5\adj 2$, and this subinterval grows in size up to the whole unit interval as $k$ grows. All in all, we proved that:
\begin{theorem}
\label{th:prrank}
For every value of $\alpha\in[0\..1)$, there is an undirected graph for which PageRank violates rank monotonicity
when $\alpha$ is chosen as damping factor.
\end{theorem}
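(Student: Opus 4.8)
The plan is to prove Theorem~\ref{th:prrank} with the same parametric family $G_k$ used above for score monotonicity, letting the fibration machinery of Section~\ref{sec:fib} reduce everything to a fixed-size graph. First I would take $G_k$ as in Figure~\ref{fig:pr}: two $k$-cliques connected by a small fixed gadget, a loopless simple undirected graph on $2k+4$ vertices with no dangling nodes, so that its PageRank is defined for every $\alpha\in[0\..1)$. The structural point is that the row-normalized graph $\bar G_k$, carrying the weights $w(a)=1/d^+(s(a))$ induced by the uniform walk, admits an epimorphic weight-preserving fibration onto the nine-vertex weighted graph $B_k$ (with the stated adjacency matrix), and that this fibration persists after adding the dashed edge $5\adj7$, on both the total graph and the base. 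The uniform preference vector of $G_k$ is constant along the fibers, hence of the form $\bm u^f$ for a suitable $\bm u$ on $B_k$, so equation~(\ref{eqn:resumepr}) expresses the PageRank of $G_k$ as the damped spectral ranking of $B_k$ at $\bm u$, lifted along $f$. Since $B_k$ has a fixed number of vertices, inverting $1-\alpha B_k$ gives each PageRank value as a rational function of $\alpha$ with coefficients rational in $k$ --- precisely the objects $\operatorname{pre}_\alpha$ and $\operatorname{post}_\alpha$.

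Next I would examine the two rational functions $\operatorname{pre}_\alpha(5)-\operatorname{pre}_\alpha(2)$ and $r(\alpha)=\operatorname{post}_\alpha(5)-\operatorname{post}_\alpha(2)$, computed symbolically (in Sage) from the inverses of the two versions of $1-\alpha B_k$. Evaluating at the single point $\alpha=2/3$ one checks that $\operatorname{pre}_{2/3}(5)-\operatorname{pre}_{2/3}(2)>0$ once $k$ is large enough and $r(2/3)<0$ once $k$ is large enough; taking $k$ past both thresholds already produces one undirected graph on which, for $\alpha=2/3$, node $5$ has at least the PageRank of node $2$ before the edge $5\adj7$ is added and strictly less afterwards, so rank monotonicity fails with $x=5$, $y=7$, $z=2$.

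To upgrade this to \emph{every} $\alpha\in[0\..1)$ I would use Sturm sequences. The denominators of the two functions have no zeros in $[0\..1]$, so sign changes of the functions are sign changes of their numerators, and one computes that the first numerator has two sign changes in $[0\..1]$ and the numerator of $r$ has three. I then sandwich $2/3$ between explicit parametric points $a(k)\leq 2/3\leq b(k)$ with $a(k)\to 0$ and $b(k)\to 1$ as $k\to\infty$ (with slightly different choices for the two functions), and count, again by Sturm sequences, the sign changes of each numerator on $[0\..a(k)]$ and on $[b(k)\..1]$. When those counts exhaust the total number of sign changes on $[0\..1]$, no sign change is left inside $(a(k),b(k))$, so each numerator has constant sign there; together with the values at $2/3$ this gives $\operatorname{pre}_\alpha(5)>\operatorname{pre}_\alpha(2)$ and $\operatorname{post}_\alpha(5)<\operatorname{post}_\alpha(2)$ simultaneously for all $\alpha$ in the intersection of the two intervals $(a(k),b(k))$. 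Finally, given any $\alpha\in[0\..1)$, I would pick $k$ large enough that $\alpha$ lies in that intersection and also exceeds the finitely many numerical thresholds above; the graph $G_k$ is then an undirected graph on which PageRank with damping factor $\alpha$ violates rank monotonicity.

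The main obstacle is not conceptual but computational and bookkeeping-heavy: $\operatorname{pre}_\alpha$ and $\operatorname{post}_\alpha$ are far too cumbersome to handle by hand, so one must rely on exact computer algebra for the matrix inversions, for the sign at $\alpha=2/3$, and --- most delicately --- for the Sturm counts, which have to be shown to be stable in $k$ on the end intervals $[0\..a(k)]$ and $[b(k)\..1]$ so that the ``no sign change in the middle'' conclusion holds uniformly in $k$. The room to choose the sandwiching functions $a(k),b(k)$ (they only need to bracket $2/3$ and tend to the endpoints of the unit interval) is exactly what makes these counts manageable.
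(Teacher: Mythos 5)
Your proposal is correct and follows essentially the same route as the paper: the same fibered family $G_k\to B_k$, the same pair of rational functions $\operatorname{pre}_\alpha(5)-\operatorname{pre}_\alpha(2)$ and $\operatorname{post}_\alpha(5)-\operatorname{post}_\alpha(2)$ evaluated at $\alpha=2/3$ for large $k$, and the same Sturm-sequence sandwiching of $2/3$ by parametric endpoints tending to $0$ and $1$ to rule out sign changes in the middle interval. The only differences are cosmetic (the paper's explicit thresholds $k\geq 14$, $k\geq 6$, $k\geq 25$ and its particular choices of $a(k)$, $b(k)$), so nothing further is needed.
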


\section{Experiments on IMDB}

\begin{table}[t]
\renewcommand{\arraystretch}{1.2}
\renewcommand{\tabcolsep}{1ex}
\begin{tabular}{lll}
Score increase & Score decrease & Violations of rank monotonicity\\
\hline
Meryl Streep & Yasuhiro Tsushima & Anne--Mary Brown, Jill Corso,~\ldots\\
Denzel Washington & Corrie Glass & Patrice Fombelle, John Neiderhauser,~\ldots\\
Sharon Stone & Mary Margaret (V) & Dolores Edwards, Colette Hamilton,~\ldots\\
John Newcomb & Robert Kirkham & Brandon Matsui, Evis Trebicka,~\ldots 
\end{tabular}
\vspace*{.5em}
\caption{\label{tab:rank}A few examples of violations of score monotonicity and rank monotonicity in the Hollywood co-starship graph
\texttt{hollywood-2011}. If we add an edge between the actors in the first and second column, the first actor has a score increase, the second actor has a score decrease,
and the actors in the third column, which were less important than the second actor, become more important after the edge addition.}
\end{table}

To show that our results are not only theoretical, we provide a few interesting anecdotal examples from
the PageRank scores ($\alpha=0.85$) of the Hollywood co-starship graph,
whose vertices are actors/actresses in the Internet Movie Database, with an edge connecting them if played in the same movie.
In particular, we used the \texttt{hollywood-2011} dataset from the Laboratory for Web Algorithmics,\footnote{\url{http://law.di.unimi.it/}}
which contains approximately two million vertices and $230$ million edges.

To generate our examples, we picked two actors either at random, or considering
the top $1/10000$ of the actors of the graph in PageRank order and the bottom
quartile, looking for a collaboration that would hurt either actor (or
both).\footnote{Note that for this to happen, the collaboration should be a
two-person production. A production with more people would actually add more
edges.} About $4$\% of our samples yielded a violation of monotonicity, and in
Table~\ref{tab:rank} we report a few funny examples.

It is interesting to observe that in the first three cases it is the less-known actor that loses score (and rank) by the collaboration
with the star, and not the other way round, which is counterintuitive.
In the last case, instead, a collaboration would damage the most important vertex, and
it is an open problem to prove a result analogous to Theorem~\ref{th:prrank} for this case.
We found no case in which both actors would be hurt by the collaboration.

\section{Conclusions}

We have studied score and rank monotonicity for three fundamental kinds of spectral ranking---eigenvector
centrality, Seeley's index, and PageRank. Our results show that except for
Seeley's index on connected graphs, there are always cases in which score and rank monotonicity fail,
contrarily to the directed case, and these failures can be found in real-world graphs.
In particular, for PageRank we can find a counterexample for every value of the damping factor.
Finding such a class of counterexamples for Katz's index~\cite{KatNSIDSA} is an interesting open problem.
Another valuable contribution would be to find another class of counterexamples for PageRank that is amenable to a simpler analytic proof without having to rely on computer algebra. 

Our results suggest that common knowledge about the behavior of PageRank in the directed case cannot
be applied automatically to the undirected case.


\vspace*{-.7em}
\bibliography{biblio}

\end{document}